\newcommand{\ds}{\displaystyle}
\newtheorem{theorem}{Theorem}
\newtheorem{proposition}{Proposition}
\newtheorem{Def}{Definition}
\newcommand{\R}{\mathbb{R}}
\begin{document}

\title{
On recoverability properties of fixed measurement matrices
}
\author{Anatoly Eydelzon\\
Department of Mathematical Sciences\\
The University of Texas at Dallas\footnote{anatoly@utdallas.edu}}
\maketitle

\begin{abstract}
The purpose of this paper is to extend a result by Donoho and Huo, Elad and Bruckstein,  Gribnoval and Nielsen on sparse representations
of signals in dictionaries to general matrices. We consider a general fixed measurement matrix, not necessarily a dictionary, and derive
sufficient condition for having unique sparse representation of signals in this matrix. Currently, to the best of our knowledge, no such method exists.
In particular, if matrix is a dictionary, our method is at least as good as the method proposed by Gribnoval and Nielsen.
 
\end{abstract}

\section{Introduction}
Given a data vector $\tilde{x}\in\mathbb{R}^n$, the linear measurements $y_i$ of the data $\tilde{x}$ consist of the inner products of $\tilde{x}$ with a number of measurement vectors $a_i\in\mathbb{R}^n$, $i = 1,2,\ldots,m$, that is $y_i = \langle a_i,\tilde{x}\rangle$. In matrix form $\tilde{y} = A\tilde{x}$, where $A$ is an $m\times n$ matrix, called the measurement or encoding matrix, that consists of $a_i$'s as its rows and $m$ is the number of measurements. 

If the number of measurements is less than the dimension of the data, that is, $m<n$, the linear system $A\tilde{x} = \tilde{y}$ is under-determined, and therefore has infinitely many solutions, which makes the recovery of $\tilde{x}$ impossible. However if (a) the data vector $\tilde{x}$ is sufficiently sparse and (b) the encoding matrix $A$ contains a sufficient number of measurements and satisfies certain properties, then $\tilde{x}$ can be recovered (exactly or to a given accuracy) at a polynomial time complexity.

We consider the following recovery problem of a sparse vector $\tilde{x}\in\mathbb{R}^n$
from its linear measurement $\tilde{y} = A\tilde{x}\in\mathbb{R}^m$,
where $A$ is a known $m\times n$ full rank matrix and $m<n$. The associated optimization problem could be stated as
\begin{equation}\label{intro-0P}
\min_{x\in\mathbb{R}^n}\{\|x\|_0: Ax = \tilde{y}\},
\end{equation}
where $\|x\|_0$ is the number of nonzero entries of $x$. This problem is non-convex and therefore can not be solved by
conventional optimization methods.

On the other hand we can solve the following problem which can be written as a linear program (LP) via a standard transformation,
\begin{equation}\label{intro-LP}
\min_{x\in\mathbb{R}^n}\{\|x\|_1: Ax = \tilde{y}\}
\end{equation}
and ask a question: Under what conditions on A and $\tilde{x}$ are the problems (\ref{intro-0P}) and (\ref{intro-LP}) uniquely solved by $\tilde{x}$?

\begin{Def}[Partition]
By a partition (S,Z) we mean a partition of the index set $\{1,2,\ldots ,n\}$ into two disjoint subsets $S$ and $Z$ such that
$S\cup Z = \{1,2,\ldots ,n\}$ and $S\cap Z = \emptyset$. In particular, for any $x\in \mathbb{R}^n$, the partition
$(S(x),Z(x))$ refers to the support $S({x})$ of $x$ and its complement -- the zero set
$Z(x)$, namely
\begin{equation}
S(x) = \{i:{x}_i\neq 0, 1\leq i\leq n\},\ \ Z(x) = \{i:{x}_i =0, 1\leq i\leq n\}.
\end{equation}
\end{Def}

\begin{Def}[$k$-balancedness]\label{sec:two-KBdef}
A subspace $V\subseteq\mathbb{R}^n$ is $k$-balanced (in $l_1$ norm) if for any partition $(S,Z)$ with cardinality of $S$ equals to $k$
\begin{equation*}
\|v_S\|_1\leq\|v_Z\|_1, \forall v\in V.
\end{equation*}
It is strictly $k$-balanced if the strict inequality holds for all $v\neq 0$.
\end{Def}

Definitions of $k$-balancedness was introduced by Zhang in \cite{zhang05-2}.
However, $k$-balancedness was used by Donoho and Huo in \cite{donoho-huo-2001},
Elad and Bruckshtein in \cite{elad-bruckshtein-02}, Gribnoval and Nielsen  in \cite{gribnoval-nielsen-03} .

\begin{theorem}[Necessary and Sufficient Conditions for Recovery]\label{sec:ch2-NSF}
Let $A\in\mathbb{R}^{m\times n}$ and $B\in\mathbb{R}^{p\times n}$ be full rank such that $p+m=n$ and $AB^T = 0$.
Then for any $\tilde{x}$ with $\|\tilde{x}\|_0\leq k$ and $\tilde{y} = A\tilde{x}$, $\tilde{x}$
uniquely solves (\ref{intro-0P}) and (\ref{intro-LP}) if and only if $range(B^T)\subset\mathbb{R}^n$ is strictly $k$-balanced.
\end{theorem}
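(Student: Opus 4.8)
The idea is to replace the two optimization problems by a single geometric condition on the null space of $A$ and then verify the two implications. First I would note that, since $AB^{T}=0$, $\operatorname{rank}A=m$, and $\operatorname{rank}B=p$ with $p+m=n$, the $p$ linearly independent rows of $B$ all lie in the null space of $A$, which has dimension $n-m=p$; hence $\ker A=range(B^{T})$. Consequently the feasible set of (\ref{intro-0P}) and (\ref{intro-LP}) is the affine flat $\tilde x+range(B^{T})$, and ``$\tilde x$ uniquely solves (\ref{intro-LP})'' is equivalent to $\|\tilde x+v\|_{1}>\|\tilde x\|_{1}$ for every nonzero $v\in range(B^{T})$, and similarly for (\ref{intro-0P}) with $\|\cdot\|_{0}$. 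I would also record a monotonicity remark: a strictly $k$-balanced subspace is strictly $k'$-balanced for every $k'\le k$ --- enlarge an index set $S'$ of size $k'$ to one $S$ of size $k$ and use $\|v_{S'}\|_{1}\le\|v_{S}\|_{1}<\|v_{Z}\|_{1}\le\|v_{Z'}\|_{1}$ for $v\ne 0$. This is what lets me handle vectors $\tilde x$ with $\|\tilde x\|_{0}$ possibly smaller than $k$.

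For the ``if'' direction, assume $range(B^{T})$ is strictly $k$-balanced and $\|\tilde x\|_{0}\le k$. Writing $S=S(\tilde x)$ and $Z=Z(\tilde x)$ (so $|S|\le k$ and $\tilde x_{Z}=0$), for any nonzero $v\in range(B^{T})$ I would split the norm over $S$ and $Z$ and combine the triangle inequality with the monotonicity remark:
\[
\|\tilde x+v\|_{1}=\|\tilde x_{S}+v_{S}\|_{1}+\|v_{Z}\|_{1}\ \ge\ \|\tilde x_{S}\|_{1}-\|v_{S}\|_{1}+\|v_{Z}\|_{1}\ >\ \|\tilde x\|_{1},
\]
which proves uniqueness for (\ref{intro-LP}). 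For (\ref{intro-0P}): if some feasible $x'\ne\tilde x$ had $\|x'\|_{0}\le\|\tilde x\|_{0}\le k$, then the same estimate with $S(x')$ in place of $S$ and $v=\tilde x-x'$ would give $\|\tilde x\|_{1}>\|x'\|_{1}$, contradicting the inequality just established. Hence every feasible $x'\ne\tilde x$ has $\|x'\|_{0}>\|\tilde x\|_{0}$, i.e.\ $\tilde x$ uniquely solves (\ref{intro-0P}).

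For the ``only if'' direction I would argue by contraposition. If $range(B^{T})$ is not strictly $k$-balanced, there are a partition $(S,Z)$ with $|S|=k$ and a nonzero $v\in range(B^{T})$ with $\|v_{S}\|_{1}\ge\|v_{Z}\|_{1}$. Set $\tilde x_{i}=-v_{i}$ for $i\in S$ and $\tilde x_{i}=0$ for $i\in Z$; then $\|\tilde x\|_{0}\le k$, and $\tilde x\ne 0$ since $v_{S}=0$ would force $v_{Z}=0$, hence $v=0$. With $\tilde y=A\tilde x$, the vector $x'=\tilde x+v$ is feasible, distinct from $\tilde x$, and satisfies $\|x'\|_{1}=\|v_{Z}\|_{1}\le\|v_{S}\|_{1}=\|\tilde x\|_{1}$; so $\tilde x$ is either not an $\ell_{1}$ minimizer or not the unique one, and in either case does not uniquely solve (\ref{intro-LP}). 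This $\tilde x$ violates the hypothesis, completing the contrapositive.

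All the estimates are one line; the places that need care are the bookkeeping in the necessity step --- a single feasible competitor of no larger $\ell_{1}$ norm already destroys ``unique'' solvability, whether or not $\tilde x$ itself is optimal --- the reduction from $\|\tilde x\|_{0}\le k$ to an honest use of strict $k$-balancedness (the monotonicity remark), and, in the ``if'' direction, transferring uniqueness from the $\ell_{1}$ problem to the $\ell_{0}$ problem without circularity. I expect the logical organization, rather than any single inequality, to be the only real obstacle.
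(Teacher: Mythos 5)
Your proof is correct. Note that the paper does not actually prove Theorem~\ref{sec:ch2-NSF} --- it cites Zhang \cite{zhang05-2} --- but your argument (identify the feasible set with $\tilde x+range(B^T)$ via $\ker A=range(B^T)$, prove sufficiency by the split $\|\tilde x+v\|_1=\|\tilde x_S+v_S\|_1+\|v_Z\|_1$ together with the monotonicity of strict $k$-balancedness in $k$, and prove necessity by the contrapositive construction $\tilde x=-v_S$) is exactly the standard null-space-property argument given in the cited sources, with the bookkeeping for the $\ell_0$ problem and for non-unique minimizers handled correctly.
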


In \cite{zhang05-2} Zhang stated Theorem \ref{sec:ch2-NSF} in its current form and gave a simple proof
by connecting equivalent recoverability conditions for different spaces.
The theorem was used without being stated explicitly by Donoho and Huo in \cite{donoho-huo-2001}
and by Elad and Bruckshtein in \cite{elad-bruckshtein-02} and was stated as Lemma
by Gribnoval and Nielsen in \cite{gribnoval-nielsen-03}.

\begin{Def}[Dictionary]
We say that $A$ is a dictionary if the columns of $A$ are unit vectors.
\end{Def}

\begin{Def}[Coherence of a Dictionary]
Let $A\in\mathbb{R}^{m\times n}$ be a dictionary. The coherence of a dictionary $M(A)$ is defined by
\begin{equation}
M(A) = \max_{i\neq j}|\langle a_i,a_j\rangle|,
\end{equation}
where $a_i,1\leq i\leq n$, is the $i$-th column of $A$.
\end{Def}

Next theorem is due to Gribnoval and Nielsen \cite{gribnoval-nielsen-03}.

\begin{theorem}\label{sec:ch2-GN-1}
Let $k$ be a natural number and let $\|\tilde{x}\|_0 \leq k$.
For any dictionary $A$, if $k < \frac{1}{2}\left(1+ \frac{1}{M(A)}\right)$ and $\tilde{y} = A\tilde{x}$, then $\tilde{x}$ is the unique solution
to both (\ref{intro-0P}) and (\ref{intro-LP}).
\end{theorem}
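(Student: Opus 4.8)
The plan is to deduce this from Theorem~\ref{sec:ch2-NSF} and then verify the hypothesis of that theorem directly from the coherence bound. First I would fix a full rank $B\in\R^{p\times n}$ with $p=n-m$ and $AB^T=0$; since $A$ has full rank, $range(B^T)$ is contained in the null space $\ker A$ and has dimension $n-m=\dim\ker A$, so $range(B^T)=\ker A$ exactly. By Theorem~\ref{sec:ch2-NSF} it then suffices to prove that $\ker A$ is strictly $k$-balanced, i.e.\ that $\|v_S\|_1<\|v_Z\|_1$ for every partition $(S,Z)$ with $|S|=k$ and every nonzero $v\in\ker A$.

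So fix such a partition and a nonzero $v$ with $Av=0$, i.e.\ $\sum_{j=1}^n v_j a_j=0$. The key local estimate comes from testing this identity against $a_i$ for an arbitrary index $i$: because the columns are unit vectors, $0=\langle a_i,\sum_j v_j a_j\rangle=v_i+\sum_{j\neq i}v_j\langle a_i,a_j\rangle$, hence $|v_i|\le M(A)\sum_{j\neq i}|v_j|=M(A)\bigl(\|v\|_1-|v_i|\bigr)$, which rearranges to $|v_i|\le\frac{M(A)}{1+M(A)}\|v\|_1$. (Here $M(A)>0$, since $M(A)=0$ would make the columns orthonormal, impossible for $m<n$.)

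Next I would sum this pointwise bound over the $k$ indices in $S$ and use $\|v\|_1=\|v_S\|_1+\|v_Z\|_1$, obtaining $\|v_S\|_1\le c\bigl(\|v_S\|_1+\|v_Z\|_1\bigr)$ with $c:=k\,\frac{M(A)}{1+M(A)}$. The hypothesis $k<\tfrac12\bigl(1+\tfrac1{M(A)}\bigr)$ is exactly equivalent to $c<\tfrac12$ (clear the denominators and recall $k\ge 1$), and in particular $c<1$, so dividing gives $\|v_S\|_1\le\frac{c}{1-c}\|v_Z\|_1$. Since $c<\tfrac12$ we have $\frac{c}{1-c}<1$, and this also forces $\|v_Z\|_1>0$ (otherwise $\|v_S\|_1=0$ and then $v=0$), so $\|v_S\|_1<\|v_Z\|_1$. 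Hence $\ker A=range(B^T)$ is strictly $k$-balanced, and Theorem~\ref{sec:ch2-NSF} yields that every $\tilde x$ with $\|\tilde x\|_0\le k$ is the unique solution of both \eqref{intro-0P} and \eqref{intro-LP}.

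Everything here is elementary; there is no deep obstacle. The points that need care are (i) identifying $range(B^T)$ with $\ker A$ so Theorem~\ref{sec:ch2-NSF} is genuinely applicable, (ii) the self-referential step that isolates $|v_i|$ on one side of the inequality, and (iii) checking that the somewhat wasteful "sum the pointwise bound over $S$'' step still produces precisely the stated threshold and delivers a \emph{strict} inequality together with $\|v_Z\|_1>0$. I expect (iii) to be the most delicate bookkeeping, but only mildly so.
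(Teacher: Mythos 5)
Your proof is correct, and it is essentially the argument the paper relies on: the paper does not prove Theorem~\ref{sec:ch2-GN-1} directly (it cites Gribonval--Nielsen), but your key estimate $|v_i|(1+M(A))\leq M(A)\|v\|_1$ for $v$ in the null space is exactly the computation reproduced in the proof of Theorem~\ref{sec:three-Lemma7}, and your route through Theorem~\ref{sec:ch2-NSF} via strict $k$-balancedness matches how the paper assembles such recovery statements (cf.\ Proposition~\ref{sec:three-WSCR}). The only cosmetic difference is that you sum the pointwise bound over $S$ directly instead of passing through $\|v_S\|_1\leq k\|v\|_\infty$ and the $\gamma_{1,\infty}$-width, which yields the same threshold.
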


It is necessary to mention that if $m$ is a power of 2, then there exists a dictionary $A$ such that $M(A) = \frac{1}{\sqrt{m}}$.
See, for example  \cite{calderbank-cameron-97} and \cite{strohmer-heath-03}.

\section{Main Result}

\begin{Def}[$\gamma_{1,\infty}$-width]
Let both $A\in\mathbb{R}^{m\times n}$ and $B\in\mathbb{R}^{p\times n}$ be of full rank and $p+m=n$ and $AB^T = 0.$
We define $\gamma_{1,\infty}$-width of $A$ to be 
\begin{equation}
\ds\gamma_{1,\infty}(A) = \min_{x\in\mathbb{R}^{p},\ x\neq 0}\frac{\|B^T x\|_1}{\|B^T x\|_\infty} = \min_{\|B^T x\|_\infty = 1}\|B^T x\|_1.
\end{equation}
\end{Def}

The feasible set $\{x\in\R^p: \|B^T x\|_\infty = 1\}$ is non-convex, however, it is a union of $2n$ convex sets
\begin{equation}
\{x\in\R^p: \|B^T x\|_\infty = 1\} = \bigcup_{i=1}^n\pm F_i,
\end{equation}
where
\begin{equation}
F_i = \{x\in\R^p: [B^T x]_i = 1; |[B^T x]_j| \leq 1, j\neq i\}.
\end{equation}

Therefore,
\begin{equation}
\gamma_{1,\infty}(A) = \min_{1\leq i\leq n}\min_{x\in F_i}\|B^Tx\|_1.
\end{equation}

For every $1\leq i\leq n$, $\min_{x\in F_i}\|B^Tx\|_1$ could be rewritten as a linear program via a standard transformation.
Therefore, in order to compute $\gamma_{1,\infty}(A)$ it is necessary to solve $n$ linear programs.
While it requires considerable computational efforts for a large $n$, the problem is solvable in polynomial time.

Alternatively, one can solve the reciprocal problem
\begin{equation}
\ds\frac{1}{\gamma_{1,\infty}(A)} = \max_{\|B^T x\|_1= 1}\|B^T x\|_{\infty}.
\end{equation}

Next proposition presents sufficient condition for recovery.
It follows directly from Theorem \ref{sec:ch2-NSF}.

\begin{proposition}[Sufficient Condition for Recovery]\label{sec:three-WSCR}
Recovery is guaranteed whenever $k<\frac{1}{2}\gamma_{1,\infty}(A)$.
\end{proposition}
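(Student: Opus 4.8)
The plan is to obtain this as an immediate corollary of Theorem~\ref{sec:ch2-NSF}. That theorem says that \emph{every} $\tilde x$ with $\|\tilde x\|_0\le k$ is recovered (i.e.\ uniquely solves both (\ref{intro-0P}) and (\ref{intro-LP})) exactly when $V:=\mathrm{range}(B^T)$ is strictly $k$-balanced. So it suffices to show that the hypothesis $k<\tfrac12\gamma_{1,\infty}(A)$ forces $V$ to be strictly $k$-balanced, and the whole proof reduces to verifying Definition~\ref{sec:two-KBdef} for $V$.

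First I would fix a nonzero $v\in V$ and write $v=B^Tx$; here $x\neq 0$, and hence $\|v\|_\infty>0$, because $B$ has full rank $p$, so $B^T$ is injective. Next, fix an arbitrary partition $(S,Z)$ with $|S|=k$. The argument rests on two elementary bounds. The first is $\|v_S\|_1\le k\,\|v_S\|_\infty\le k\,\|v\|_\infty$, since $v_S$ has at most $k$ nonzero coordinates, each of absolute value at most $\|v\|_\infty$. The second is $\|v\|_1\ge\gamma_{1,\infty}(A)\,\|v\|_\infty$, which is exactly the defining property of $\gamma_{1,\infty}(A)$ as the minimum of $\|B^Tx\|_1/\|B^Tx\|_\infty$ over $x\neq 0$ (the minimum being attained, so the inequality is valid for every element of $V$).

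Combining these, I would write $\|v_Z\|_1=\|v\|_1-\|v_S\|_1\ge \gamma_{1,\infty}(A)\,\|v\|_\infty-k\,\|v\|_\infty=(\gamma_{1,\infty}(A)-k)\,\|v\|_\infty$, while $\|v_S\|_1\le k\,\|v\|_\infty$. The assumption $k<\tfrac12\gamma_{1,\infty}(A)$ is equivalent to $k<\gamma_{1,\infty}(A)-k$, so, using $\|v\|_\infty>0$, we get $\|v_S\|_1\le k\,\|v\|_\infty<(\gamma_{1,\infty}(A)-k)\,\|v\|_\infty\le\|v_Z\|_1$. This is the strict inequality in Definition~\ref{sec:two-KBdef}, so $V$ is strictly $k$-balanced, and Theorem~\ref{sec:ch2-NSF} then gives the conclusion.

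I do not anticipate a real obstacle: the statement is essentially Theorem~\ref{sec:ch2-NSF} filtered through the two norm inequalities above. The only points that need a word of care are (a) ensuring $v\neq 0\Rightarrow\|v\|_\infty\neq 0$, so that multiplying the inequality $k<\gamma_{1,\infty}(A)-k$ by $\|v\|_\infty$ preserves strictness, and (b) observing that handling partitions with $|S|$ exactly $k$ is enough, since a support of size smaller than $k$ can be enlarged to size $k$, which only increases $\|v_S\|_1$ and decreases $\|v_Z\|_1$ — and in any case this reduction is already subsumed by the statement of Theorem~\ref{sec:ch2-NSF}.
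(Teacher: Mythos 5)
Your proposal is correct and follows essentially the same route as the paper: both reduce to strict $k$-balancedness of $\mathrm{range}(B^T)$ via Theorem~\ref{sec:ch2-NSF} and chain the same two inequalities $\|v_S\|_1\le k\|v\|_\infty$ and $\|v\|_1\ge\gamma_{1,\infty}(A)\|v\|_\infty$; the paper phrases the conclusion as $\|v_S\|_1<\tfrac12\|v\|_1$ while you write the equivalent $\|v_S\|_1<\|v_Z\|_1$ directly. Your explicit remark that $v\neq 0$ implies $\|v\|_\infty>0$ (so strictness survives) is a point the paper leaves implicit.
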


\begin{proof}
Note, that $\|v_S\|_1 \leq \|v_Z\|_1$ is equivalent to $\ds\|v_S\|_1 \leq \frac{1}{2}\|v\|_1$. Therefore,
\begin{equation*}
\|v_S\|_1 \leq k\|v_S\|_{\infty} \leq k\|v\|_{\infty} < \frac{1}{2}\gamma_{1,\infty}(A)\|v\|_{\infty} \leq \frac{1}{2}\frac{\|v\|_1}{\|v\|_{\infty}}\|v\|_{\infty} \leq \frac{1}{2}\|v\|_1.
\end{equation*}
\end{proof}


Now we are ready to show that estimated sparsity $k$ for guaranteed recovery of a dictionary $A$ computed using $\gamma_{1,\infty}$-width of $A$ is always greater or equal to the estimated sparsity $k$ computed using coherence of $A$.

\begin{theorem}\label{sec:three-Lemma7}
Let $A\in\mathbb{R}^{m\times n}$ be a dictionary and $m<n$. Let $B\in\mathbb{R}^{p\times n}$,
such that $p+m = n$ and $AB^T = 0$.
Let $k_1$ and $k_2$ be the sparsities for guaranteed recovery estimated by
$\gamma_{1,\infty}(A)$ and $M(A)$ respectively. Then $k_1\geq k_2$.
\end{theorem}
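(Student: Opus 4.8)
The plan is to reduce the theorem to a single norm inequality, $\gamma_{1,\infty}(A)\geq 1+\tfrac{1}{M(A)}$, and then obtain that inequality by the coherence argument. By Proposition~\ref{sec:three-WSCR}, $k_1$ is the largest nonnegative integer with $k_1<\tfrac12\gamma_{1,\infty}(A)$, and by Theorem~\ref{sec:ch2-GN-1}, $k_2$ is the largest nonnegative integer with $k_2<\tfrac12\bigl(1+\tfrac{1}{M(A)}\bigr)$. Since the largest integer below a threshold is a nondecreasing function of the threshold, it suffices to show $\gamma_{1,\infty}(A)\geq 1+\tfrac{1}{M(A)}$.

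First I would rewrite $\gamma_{1,\infty}(A)$ intrinsically. Because $B$ has full rank $p=n-m$ and $AB^T=0$, we have $range(B^T)\subseteq\ker A$, and since $\dim\ker A=n-m=p$, in fact $range(B^T)=\ker A$. Hence
\[
\gamma_{1,\infty}(A)=\min\{\|v\|_1 : v\in\ker A,\ \|v\|_\infty=1\},
\]
so the goal becomes: $\|v\|_1\geq\bigl(1+\tfrac{1}{M(A)}\bigr)\|v\|_\infty$ for every nonzero $v\in\ker A$. (Here $M(A)>0$, since $n>m$ precludes $n$ mutually orthogonal unit columns, so the right-hand side is meaningful.)

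The main step is the standard coherence estimate applied to such a $v$. Fix a nonzero $v\in\ker A$ and choose an index $i$ with $|v_i|=\|v\|_\infty$. Writing $a_1,\dots,a_n$ for the columns of $A$ (unit vectors, by the dictionary hypothesis), the equation $Av=0$ says $\sum_j v_j a_j=0$; taking the inner product with $a_i$ and using $\langle a_i,a_i\rangle=1$ gives $v_i=-\sum_{j\neq i}v_j\langle a_i,a_j\rangle$, so that
\[
\|v\|_\infty=|v_i|\leq\sum_{j\neq i}|v_j|\,|\langle a_i,a_j\rangle|\leq M(A)\sum_{j\neq i}|v_j|=M(A)\bigl(\|v\|_1-\|v\|_\infty\bigr).
\]
Rearranging gives $\bigl(1+M(A)\bigr)\|v\|_\infty\leq M(A)\|v\|_1$, i.e.\ $\|v\|_1\geq\bigl(1+\tfrac{1}{M(A)}\bigr)\|v\|_\infty$. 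Minimizing over $v\in\ker A$ with $\|v\|_\infty=1$ yields $\gamma_{1,\infty}(A)\geq 1+\tfrac{1}{M(A)}$, and hence $k_1\geq k_2$.

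I do not anticipate a genuine obstacle: the argument is short once the reduction is in place. The only points requiring care are the identification $range(B^T)=\ker A$, the remark that $M(A)>0$ so that the comparison bound is well posed, and the bookkeeping that turns the two ``recovery holds if $k<\cdots$'' statements into a comparison of integer sparsity thresholds. The inner-product-with-$a_i$ trick is exactly the device behind Theorem~\ref{sec:ch2-GN-1}, reused here to lower-bound the width.
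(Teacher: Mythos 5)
Your proposal is correct and follows essentially the same route as the paper: both reduce the claim to the inequality $1+\tfrac{1}{M(A)}\leq\gamma_{1,\infty}(A)$ and establish it by the Gribonval--Nielsen inner-product-with-$a_i$ argument on null vectors of $A$. Your added remarks (identifying $range(B^T)=\ker A$ and noting $M(A)>0$) are careful touches the paper leaves implicit, but the substance is identical.
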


\begin{proof}
According to Theorem~\ref{sec:ch2-GN-1} and Proposition~\ref{sec:three-WSCR}, it is enough to show that
\begin{equation}
1 + \frac{1}{M(A)} \leq \gamma_{1,\infty}(A).
\end{equation}

We will follow the proof of Gribnoval and Nielsen \cite{gribnoval-nielsen-03}.

Let $v~\in~range(B^T)$, then $Av = 0$, or, in vector form $\sum_{i=1}^n v_i a_i = 0$, where $a_i$, $1\leq i\leq n$ is the $i$-th column of $A$.
Then, $v_1 a_1 = -\sum_{i=2}^n v_i a_i$. Taking the inner product of both sides with $a_1$, we get
$v_1 = -\sum_{i=2}^n v_i\langle a_i,a_1\rangle$. It follows that
\begin{equation}
|v_1| = \left|-\sum_{i=2}^n v_i\langle a_i,a_1\rangle\right| \leq M(A)\sum_{i=2}^n |v_i| = M(A)(\|v\|_1 - |v_1|),
\end{equation}
or
\begin{equation}
|v_1|(1+M(A)) \leq \|v\|_1 M(A).
\end{equation}
The same way for $2\leq i \leq n$, we get
\begin{equation}
|v_i|(1+M(A)) \leq \|v\|_1 M(A).
\end{equation}

Since this is true for every index $1\leq i \leq n$, it follows that for every vector $v~\in~range(B^T)$ the following inequality holds:

\begin{equation}
1 + \frac{1}{M(A)} \leq \frac{\|v\|_1}{\|v\|_{\infty}}.
\end{equation}

Now if we take minimum over all $v~\in~range(B^T)$ we get:
\begin{equation}
1 + \frac{1}{M(A)} \leq\min_v\frac{\|v\|_1}{\|v\|_{\infty}} =\gamma_{1,\infty}(A).
\end{equation}
which completes the proof.
\end{proof}

\section{Conclusion}
In this paper we defined $\gamma_{1,\infty}$-width of a measurement matrix $A$ and showed that if $A$ is a dictionary,
our approach to estimate recoverability properties of $A$ is at least as good as coherence approach.
Moreover, our method can be used to estimate the recoverability of $A$ even in the case $A$
is not a dictionary. Currently, to the best of our knowledge, no other such method exists.\\

\bibliographystyle{plain}

\end{document}